\newcommand{\isrevision}{1}
\newcommand{\defeq}[0]{\triangleq}
\renewcommand\AB@affilsepx{, \protect\Affilfont}
\def\url@allbreakstyle{%
  \def\UrlBreaks{\do\.\do\@\do\\\do\/\do\!\do\_\do\|\do\;\do\>\do\]%
    \do\)\do\,\do\?\do\'\do+\do\=\do\#%
    \do A\do B\do C\do D\do E\do F\do G\do H\do I\do J\do K\do L\do M%
    \do N\do O\do P\do Q\do R\do S\do T\do U\do V\do W\do X\do Y\do Z%
    \do a\do b\do c\do d\do e\do f\do g\do h\do i\do j\do k\do l\do m%
    \do n\do o\do p\do q\do r\do s\do t\do u\do v\do w\do x\do y\do z%
    \do 0\do 1\do 2\do 3\do 4\do 5\do 6\do 7\do 8\do 9%
  }%
}
\def\url@restrictedbreakstyle{%
  \def\UrlBreaks{\do\.\do\@\do\\\do\/\do\!\do\_\do\|\do\;\do\>\do\]%
    \do\)\do\,\do\?\do\'\do+\do\=\do\#}%
}
    \newcommand{\stkout}[1]{\ifmmode\text{\sout{\ensuremath{#1}}}\else\sout{#1}\fi}
\newtheorem{theo}{Theorem}
\DeclareMathOperator*{\argmax}{arg\,max}
\begin{document}
\title{Coalitional Game-Theoretical Approach to Coinvestment with Application to Edge Computing}

\author{
Rosario Patanè \\ Université Paris-Saclay \\ \emph{rosario.patane@upsaclay.fr}
\\ \\
Diego Kiedanski \\ Yale University \\ \emph{diego.kiedanski@yale.edu} 
\and Andrea Araldo; Tijani Chahed \\ Télécom SudParis - Institut Polytechnique de Paris \\ \emph{<firstname>.<lastname>@telecom-sudparis.eu}
\\ \\
Daniel Kofman \\ Télécom Paris - Institut Polytechnique de Paris \\ \emph{daniel.kofman@telecom-paristech.fr}
}

\maketitle
\begin{abstract}
We propose in this paper a coinvestment plan between several stakeholders of different types, namely a physical network owner, operating network nodes, e.g. a network operator or a tower company, and a set of service providers willing to use these resources to provide services as video streaming, augmented reality, autonomous driving assistance, etc. One such scenario is that of deployment of Edge Computing resources. 
Indeed, although the latter technology is ready, the high Capital Expenditure (CAPEX) cost of such resources is the barrier to its deployment. 
For this reason, a solid economical framework to guide the investment and the returns of the stakeholders is key to solve this issue.
We formalize the coinvestment framework using coalitional game theory. We provide a solution to calculate how to divide the profits and costs among the stakeholders, taking into account their characteristics: traffic load, revenues, utility function. We prove that it is always possible to form the grand coalition composed of all the stakeholders, by showing that our game is convex. We derive the payoff of the stakeholders using the Shapley value concept, and elaborate on some properties of our game. We show our solution in simulation.
\end{abstract}

\begin{IEEEkeywords}
Coinvestment, Multi-tenancy, Edge computing, Coalitional game theory, Shapley. 
\end{IEEEkeywords}

\IEEEpeerreviewmaketitle

\section{Introduction}
\label{sec:introduction}

Coinvestment allows several stakeholders to share expenses and revenues when deploying of certain projects, otherwise non-beneficial for them.
We model coinvestment via coalitional game theory. Our model can be applied in scenarios when (i)~costly resources must be deployed in some nodes, (ii)~only one entity, hereafter named Network Owner (NO), has access to these nodes and (iii)~such resources are beneficial to third party Service Providers (SPs).
One such scenario is Edge Computing (EC), which is the main application of this paper.
In EC, physical nodes at the edge are possessed by an NO, which can be a network operator, like AT\&T, or a tower company~\cite{Guillemin2021}; in such nodes, the NO deploys computational resources,\footnote{
Cloud providers, e.g. Amazon, distribute ``edge'' resources for rent, but their edge  locations go as far as ``data-centers at the edge of the 5G network''(\url{aws.amazon.com/fr/edge}). We instead consider edge nodes much closer to users, e.g. base stations or road side units, which are owned by the NO. 
}
 which are used by third party SPs, e.g. video streaming services, as YouTube, Netflix, or car manufacturers offering in the future automated driving services, as Tesla or Renault~\cite{AraldoSAC2020,Araldo2016}. The SPs use the resources at the Edge, closer to their end-users, to distribute the load or to satisfy their low latency applications. 

The deployment costs however can be very high and cannot be supported solely by the NO. This explains today's impediment for the widespread of EC.
The deployment costs however can be very high and cannot be supported solely by the NO.
 This explains today's impediment for the widespread of EC. On the other hand, SPs offering new services, can make large benefits by offering good quality services to their end-users. Hence, it is reasonable to assume that SPs may be willing to contribute to the cost of the deployment of EC resources together with the NO~\cite{Sabella2019}. 
 On the other hand, SPs offering new services, can make large benefits by offering good quality services to their end-users. Hence, it is reasonable to assume that SPs may be willing to contribute to the cost of the deployment of EC resources together with the NO~\cite{Sabella2019}. 

The goal of this paper is to understand how coinvestment can occur, i.e. how cost of deployment and benefits should be shared among the NO and SPs. 
Our contributions are:

\begin{compactitem}
    \item We propose a model based on coalitional game theory, and describe the discrepancy between the two categories of stakeholders in terms of revenues (brought by SPs) and infrastructure deployment (by the NO).
    \item We assess the existence of the core, i.e. the fact that the grand coalition composed of all the stakeholders exists and is stable, by showing the convexity of our game.
    \item We propose sharing of the payoff between the stakeholders based on the Shapley value solution concept, which lies in the core in this case.
    \item We evaluate the performance of our proposal, through numerical examples for different scenarios and configurations, considering the case of EC.
\end{compactitem}

The remainder of the paper is as follows. \S\ref{sec:relatedwork} introduces the related work. In \S\ref{sec:model}, we formulate our model. \S\ref{sec: analysis} provides an analysis for the coinvestment plan between the stakeholders. 
In \S\ref{sec: numerical results}, we show numerical results for several scenarios and configurations. \S\ref{sec: conclusion}  concludes the paper.

\section{Related Work}
\label{sec:relatedwork}
The advantage of sharing resources between several tenants has been shown in~\cite{Afraz2018}, for the case of passive optical networks. However, they assume resources are already deployed and allocation to tenants is determined via auctions. We instead consider the case where no resources are deployed yet (as for EC) and thus devise a mechanism for all the players to coinvest to buy and deploy resources, jointly taking into account the resource allocation among players.
An example of coinvestment is given from~\cite{Kiedanski2020}. Investors are producers/consumers of energy, and a battery exists that must be sized to meet everyone's needs. Each of them can decide whether to charge or discharge the battery to use stored energy later, when energy is more expensive (daily hours). The result of coinvestment is the price that each one has to pay to buy the battery. The problem is solved using coalitional game theory~\cite{Tamer2009}, as we shall do in this work. However, we cannot directly use these results, because~\cite{Kiedanski2020} is a linear production problem, valid to model the cost of electricity, while in our case the benefit of using resources at the Edge can be non-linear, due to the diminishing return in resource deployment at the Edge~\cite{Salchow2008}. Another peculiarity of our study is that we have a veto player, which corresponds to the NO. In~\cite{Sereno2012} a coinvestment between Network Providers (NPs) is realized to improve energy efficiency in cellular access networks, i.e. in some parts of the day, except the peak hours, the resources of a single NP can be oversized, so, just a subset of NPs can serve all the load allowing the others to save energy. The resultant benefit of this sharing is later divided among the players. This paper is very close to the problem in this work, but does not match exactly because: (i) the players are at the same level (all NPs), while in our case players have different roles (NO and SPs), (ii) in our problem we do not care about exchanging load between players to save energy. Coalitional Game Theory has been applied to EC~\cite{Moura2019}, e.g. for pricing, spectrum or content sharing in D2D communication, task offloading~\cite{Pham2019}. To the best of our knowledge we are the first to apply it for co-investment in EC, and thus there are no other works with which we can compare our proposal.

\section{Coinvestment model}
\label{sec:model}
Our coinvestment problem is modeled as a coalitional game with transferable payoff, i.e. players can share a common amount of utility/cost~\cite{Osborne1994}. The game is defined by the tuple $(\mathcal{N}, v)$, where $\mathcal{N}$ is the set of players and the~\emph{coalitional value} $v(\mathcal{S})$ is a function that associates a value to any subset $\mathcal{S}\subseteq\mathcal{N}$, called \emph{coalition}. 
The players are one physical Network Owner (NO) and $N$ Service providers (SPs). The NO is the entity that owns the network nodes. It could be a network operator, who owns the location of an antenna or a central office. It could be a separated tower company~\cite{Guillemin2021}. The set of players is $\mathcal{N}=\{1,\dots,N, NO\}$.
The first question to be asked in coalitional games is whether the grand coalition $\mathcal{N}$, formed by all players, is stable, i.e. all the players have an incentive to be part of it, in terms of individual payoff. 
The payoff of any player $i\in\mathcal{N}$ is
$x^i=r^i-p^i$, where $r^i$ and $p^i$ are the revenues and the payment, i.e. the capital cost, respectively.

We assume that the NO is willing to host physical resources on its nodes. To fix ideas, such nodes are edge nodes and resources are CPU in our case. They might also refer to  GPU, etc. 
After the deployment, the NO virtualizes and allocates the resources to SPs. SPs do not own physical resources, only the NO does, in base stations for instance. Whenever players form a coalition, they invest together in deploying an amount of computational capacity $C$, measured in \emph{millicores}. 
Denoting by $h^i$ the resources allocated to player $i$, $\sum_{i \in S}h^i = C$. For $d$ denoting the price expressed in \emph{dollars} per resource unit, the sum of all players' payments should be such that $\sum_{i \in S}p^i=d\cdot C$. As the NO is only player that hosts the capacity, if it does not join the coalition, the coinvestment cannot take place, and so, no computational capacity can be deployed and no EC can be realized. So, for any coalition $\mathcal{S}$, capacity is subject to:

\begin{equation}
\label{eq:C}
C=\left\{\begin{aligned}
\frac{1}{d } \cdot \sum_{i \in S}p^i\ \text{if}\ \text{NO} \in \mathcal{S}\\
0 \ \text{otherwise}
\end{aligned}\right . 
\end{equation}

At every timeslot $t$, each player $i$ has an expected load, $l_t^i$, i.e. the average number of requests coming from users of SP $i$ at time $t$, and which is exogenous to the problem. We assume that the average load profile is the same every day. Each player $i$ has an instantaneous \textbf{utility} $u^i_t$, in monetary units, e.g. \emph{dollars}, which represents the revenues coming from the end users of the services. In other words, the utility is what users pay to a SP to consume its services.
We assume the load and the shape of the utility function are known and truthful inputs.
Similarly to~\cite{Misra2015}, the utility is function of the expected load $l^i_t$ (the more users consume the service, the more they pay) and of the allocated resources $h^i$ (the more resources, the better the service, the more users are willing to pay):
    \begin{align} 
    \label{eq:null_resources_null_utility}
        u^i_t=u^i(l^i_t,h^i);\ \ \ \ \ u^i(l_t^i,0)=0
    \end{align}
We thus assume (Eqn.\ref{eq:null_resources_null_utility} on the right) that the utility is null if no resources are allocated.

Observe that the case where NO uses some Edge resources for itself can be easily modeled in our framework by introducing a fictitious SP representing the NO using resources of EC.
The NO does not offer any Edge service to the end users, so, its load is null $(l_t^\text{NO}=0)$ and thus it does not need Edge resources for itself ($h^\text{NO}=0$), which implies that $u_t^\text{NO}=0$ and $ \sum_{i \in \mathcal{S}\setminus\{\text{NO}\}}h^i = C$. However, the NO gets a fraction of the value of the grand coalition, if it exists.

The revenues of a coalition is the sum of the utilities of the SPs over the investment period: $\sum_{i \in \mathcal{S}}r^i= D \cdot \sum_{i \in \mathcal{S}}\sum_{t \in [T]} u^i_t$, where $D = 365 \cdot Y$, $Y$ is the duration in years of the investment, $T=96$ is the number of timeslots in one day, considering each timeslot has duration $15 \ minutes$. 

We now define the value $v(\mathcal{S})$ of any coalition $\mathcal{S}\subseteq\mathcal{N}$. When forming coalition $\mathcal{S}$, the players involved choose allocation vector $\vec h$ and capacity $C$ so as to maximize the coalition value $v(\mathcal{S})$:
\begin{align}
\label{eq:max1}
     v(\mathcal{S}) 
     &= \max_{\vec h, C} v^{\Vec{h}, C,\mathcal{S}} 
     \defeq 
     \max_{\vec h, C} D \sum_{i \in \mathcal{S}}\sum_{t=1}^T u^i(l_t^i, h^i) 
     - d \cdot C
\\
\text{s.t.} &
    \label{eq: capacity_constr}
    \sum_{i \in \mathcal{S}\setminus\{\text{NO}\}}h^i=C; \,\,\,\, h^\text{NO}=0.
\\
\label{eq:variables_constr}
        & C,  h^i\geq 0, \ \forall t \in [T], \forall i \in \mathcal{S}.
\end{align}
Observe that NO is a \textbf{veto player}. Indeed, withouth NO, the investment does not take place~\cite[\S13.2]{Osborne1994} (see~\eqref{eq:C}). Therefore, applying~\eqref{eq: capacity_constr}, \eqref{eq:null_resources_null_utility} and~\eqref{eq:max1}, if $\text{NO}\notin\mathcal{S}$, we get $v(\mathcal{S})=0$. We will see also that the SPs form altogether a veto player too.

\section{Analysis}
\label{sec: analysis}
We now assess the cooperative structure of our game and its stability. We show the existence of the core, and hence the formation of the grand coalition.

\subsection{Core and convexity of the game}
\label{sec: Core}
Let us define a payoff vector $(x^i)_{i \in \mathcal{N}}$.
The core is a set of payoff vectors, such that a payoff vector is in the core if the payoffs of each player are such that no subgroup can gain by quitting the grand coalition and forming a different coalition~\cite{Osborne1994}. A well known result in coalitional game theory affirms that the core is non empty if the game is convex~\cite{Shapley1971} and that a particular payoff vector having some ``fairness'' properties, i.e. the Shapley value~\cite{Tamer2009,Shapley1971} lies in the core. 

\begin{theo} 
\label{theo: the game is convex}
(The game is convex). 
Our game $(\mathcal{N}, v)$, whose value function $v$ is described by the optimization problem~\eqref{eq:max1}-\eqref{eq:variables_constr} is convex.
\end{theo}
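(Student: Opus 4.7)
The plan is to apply the standard supermodular characterization of convexity, i.e.\ to show that $v(\mathcal{S}\cup\mathcal{T}) + v(\mathcal{S}\cap\mathcal{T}) \geq v(\mathcal{S}) + v(\mathcal{T})$ for all $\mathcal{S}, \mathcal{T} \subseteq \mathcal{N}$. The key structural observation is that the optimization in \eqref{eq:max1}--\eqref{eq:variables_constr} decouples across SPs once $C$ is eliminated via \eqref{eq: capacity_constr} and $h^{\text{NO}}=0$. Indeed, substituting $C=\sum_{i\in\mathcal{S}\setminus\{\text{NO}\}} h^i$ turns the objective into $\sum_{i\in\mathcal{S}\setminus\{\text{NO}\}}\bigl[D\sum_{t=1}^T u^i(l_t^i,h^i) - d\,h^i\bigr]$, in which each SP's term depends only on its own allocation $h^i$. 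Defining
$$f^i \defeq \max_{h\geq 0}\Bigl[D\sum_{t=1}^T u^i(l_t^i,h) - d\,h\Bigr] \geq 0$$
(non-negativity follows by choosing $h=0$ and using \eqref{eq:null_resources_null_utility}), this yields the closed form $v(\mathcal{S}) = \sum_{i\in\mathcal{S}\setminus\{\text{NO}\}} f^i$ whenever $\text{NO}\in\mathcal{S}$, and $v(\mathcal{S})=0$ otherwise by the veto-player observation already stated after \eqref{eq:variables_constr}.

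Armed with this explicit formula, I would verify the supermodular inequality through a short case split on whether NO belongs to $\mathcal{S}$ and to $\mathcal{T}$. When NO lies in both, all four coalitions contain NO and the inequality becomes an \emph{equality}, because the set function $\mathcal{S}\mapsto \sum_{i\in\mathcal{S}\setminus\{\text{NO}\}} f^i$ is modular (additive). When NO lies in neither, all four terms vanish. When NO lies in exactly one of them, say $\mathcal{S}$, we have $v(\mathcal{T})=v(\mathcal{S}\cap\mathcal{T})=0$, so the inequality collapses to $v(\mathcal{S}\cup\mathcal{T})\geq v(\mathcal{S})$, which holds because $\mathcal{S}\subseteq\mathcal{S}\cup\mathcal{T}$ and every $f^i$ is non-negative.

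The main obstacle will be justifying the separability formula for $v(\mathcal{S})$ cleanly: one must argue that $C$ is slack (its optimal value is exactly $\sum_i h^i$ so it can be eliminated without loss), and that swapping $\max$ with the finite sum over $i$ is legitimate because the per-SP subproblems share no coupling variable. This in turn requires the $f^i$ to be well-defined, which holds under mild regularity of $u^i(\cdot,\cdot)$ in $h$ (e.g.\ continuity together with either boundedness or a growth rate in $h$ slower than the linear cost $d\,h$). Once this separability step is secured, convexity of the coalitional game reduces to the triviality that a non-negative modular function, pre-multiplied by the indicator $\mathbf{1}[\text{NO}\in\mathcal{S}]$, is supermodular.
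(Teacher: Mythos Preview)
Your proposal is correct and follows essentially the same approach as the paper: both arguments hinge on the separability of $v(\mathcal{S})$ into per-SP terms $f^i=\max_{h\ge 0}\bigl[D\sum_t u^i(l_t^i,h)-d\,h\bigr]$, after which supermodularity is checked via a case split on NO's membership. The only cosmetic difference is that the paper verifies the equivalent ``increasing marginal contribution'' form $\Delta_i(\mathcal{T})\le\Delta_i(\mathcal{S})$ for $\mathcal{T}\subseteq\mathcal{S}$, whereas you verify the union/intersection inequality directly; your framing of $v$ as a non-negative modular function gated by the indicator $\mathbf{1}[\text{NO}\in\mathcal{S}]$ makes the final step slightly crisper than the paper's case enumeration.
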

\begin{proof} 
We can rewrite $v(\mathcal{S})$, via~\eqref{eq:max1},\eqref{eq: capacity_constr}, as 
\begin{align}
    \label{eq: separated_value_function}
    v(\mathcal{S}) & =\sum_{i \in \mathcal{S}}\max_{h^i}v^{h^i}
    &
   \text{ where }
   &
   v^{h^i} & \triangleq D \cdot \sum_{t=1}^T u^i(l_t^i, h^i)- d \cdot h^i.
\end{align}

The maximum over $h^i$ of Eqn.~\eqref{eq: separated_value_function} gives the contribution of a single player $i$ to the coalitional value, considering its part of the revenues due to its utility function, and the cost of the resources $h^i$ it uses to produce this utility. 
Observe that such a contribution is independent from the coalition $\mathcal{S}$ in which $i$ participate.
Eqns.~\eqref{eq: separated_value_function} show that the contribution is separable, i.e. it is the summation of the value functions of the individual players. 

We now prove that the game is supermodular, which implies its convexity, thanks to~\cite{Driessen1988}. A game is supermodular if
\begin{align}
\label{eq: check_supermodularity}
    \Delta_i(\mathcal{T}) 
    &\leq \Delta_i(\mathcal{S}), 
    \forall \mathcal{T}\subseteq \mathcal{S}\subseteq \mathcal{N}\setminus\{i\}, \ \forall i \in \mathcal{N}
    \\
\text{where }
    \Delta_i(\mathcal{S}) 
    &= 
    v(\mathcal{S} \cup  \{ i \}) - v(\mathcal{S})
    \underset{\eqref{eq: separated_value_function}}{=}
    \max_{h^i} v^{h^i}
\end{align}
is the marginal contribution of player $i$ to coalition $\mathcal{S}$.

Let us fix any $i\in\mathcal{N}$. Given two coalitions, $\mathcal{S}$ and $\mathcal{T}$, such that $\mathcal{T}\subseteq\mathcal{S}\subseteq\mathcal{N}\setminus\{i\}$, we calculate the marginal contribution of  player $i$ to both coalitions. Consider the case in which $\text{NO} \in \mathcal{T}$ and $i=\text{SP}^i$, for coalition $\mathcal{T}$ we have (see~\eqref{eq: separated_value_function}): $    \Delta_i(\mathcal{T})=
    \max_{h^i} v^{h^i}$.
For coalition $\mathcal{S}$ we have: $\Delta_i(\mathcal{S})=\max_{h^i}v^{h^i}$
so, the marginal contributions are  $\Delta_i(\mathcal{S})=\Delta_i(\mathcal{T})$. 

Now, we consider the case $i=\text{NO}$. In this case the proof is trivial, in fact $\text{NO} \notin \mathcal{T}\cup \mathcal{S}$, so, for the fact that NO is a veto player
\begin{align}
\label{eq: step_00}
            v(\mathcal{T} \cup \{i\}) 
            &\geq 0,  v(\mathcal{T})=0, \ \forall \mathcal{T}\setminus\{\text{NO}\}
\\
        \label{eq: step_01}
         \text{and }  v(\mathcal{S} \cup \{i\}) 
         & \geq 0,  v(\mathcal{S})=0, \ \forall \mathcal{S}\setminus\{\text{NO}\}.
\end{align}
Therefore, Eqn.~\eqref{eq: check_supermodularity} is verified if and only if
\begin{equation}
           v(\mathcal{S} \cup \{i\}) - v(\mathcal{T} \cup \{i\})  \geq 0
\end{equation}
which is equivalent to $\sum_{j \in \mathcal{S}\setminus \mathcal{T}}\max_{h^j}v^{h^j} \geq 0$
where the last inequality is obviously true, since we have a sum of non-negative terms. Another case is the following: $\text{NO} \notin \mathcal{T}\cup \mathcal{S}$ and $i=\text{SP}^i$. This case is easy to prove because we get
$
    v(\mathcal{T} \cup \{i\}) - v(\mathcal{T}) = v(\mathcal{S} \cup \{i\}) - v(\mathcal{S})=0 
$,
which satisfies the definition of supermodularity. The last case to prove is: $i=\text{SP}^i$, $\text{NO} \notin \mathcal{T}$. In this case the marginal contribution of $i$ to coalition $\mathcal{T}$ is null and Eqn.~\eqref{eq: check_supermodularity} becomes
$
           \Delta_i(\mathcal{S}) \geq 0
$. To verify this, we observe that, thanks to~\eqref{eq: separated_value_function}: 
\[
    \Delta_i(\mathcal{S})=v(\mathcal{S}\cup\{i\})-v(\mathcal{S})=
    \begin{cases}\max_{h^i} v^{h_i} & \text{if NO}\in\mathcal{S} \\ 0 & \text{otherwise}\end{cases}\ge 0.
\]
This completes the supermodularity proof and thus convexity. So, the grand coalition can be always formed.
\end{proof}

\subsection{Shapley value}
Finding a mechanism to share the payoff among players is not trivial. One idea would be to divide the payoff equally among players. However, this would not be accepted, since some players contribute to the coalition more than others. First, the NO is a veto player, and its contribution is of primary importance. Second, the SPs do not contribute equally to the coalition: some SPs have more users than others. A second idea would be to share payoffs proportionally to the request load of each SP. However, this would be still unfair, as the benefits collected by SPs do not only depend on the quantity of requests, but also on their type (see \S\ref{sec:utility-function-and-price}).

Fortunately, the proof of convexity in the Th.~\ref{theo: the game is convex} gives us the certainty that
there is a somehow fair way to share the payoff: the Shapley value, which considers the marginal contribution of each player to all the possible coalitions and is computed as~\cite{Osborne1994}: $x^i=\phi^i=\frac{1}{|\mathcal{N}|!}\sum_{\mathcal{S} \subseteq \mathcal{N}\setminus\{i\}}|\mathcal{S}|! \cdot (|\mathcal{N}|-|\mathcal{S}|-1)!\cdot\Delta_i(\mathcal{S})$.

\subsection{Initial investment of players}
\label{sec:initialinvest}
Now that we derived the payoff $x^i$ for each player, we need to calculate how much each player must pay at the beginning of the investment, i.e. $p^i$. This is obtained by solving the following equations:

\begin{flalign}
\label{eq:payments1}
    r^i-p^i &=x^i, \forall i  \in  \mathcal{N}
\\
\text{s.t.} \ \   \sum_{i \in \mathcal{N}} r^i &= D \cdot \sum_{i \in \mathcal{N}}\sum_{t=1}^T u^i(l_t^i, h^{*i})
\\
\label{eq:payments3}
\text{where:} \ \   \vec h^*,C^* &= \argmax_{\vec h, C} v^{\vec h, C,\mathcal{N}} \ \text{s.t.\eqref{eq: capacity_constr}-\eqref{eq:variables_constr}}.
\end{flalign}

\subsection{Relevant properties of our game}
\label{sec:properties of the model}
If player $i$ does not produce revenues and makes not payments, then it is a null player, i.e. $v(\mathcal{S} \cup \{i\})=v(\mathcal{S})$~\cite{Brink2007}. Note that there can be players that do not pay or are even paid ($p^i\le 0$), which still positively contribute to the coalition. For instance, any SP $i$ can positively contribute to the coalition collecting large revenues $r^i$. The NO is never null player, because it is veto player and contributes always to any coalition.

\begin{theo}(Payoff sharing)
\label{theo: payoff_proportions}
The Shapley outcome of the game $(\mathcal{N}, v)$, where $v$ is described by the problem~\eqref{eq:max1}-\eqref{eq:variables_constr}, is divided equally between the NO and the set of all SPs.
\end{theo}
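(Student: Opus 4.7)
My plan is to exploit the separable form of the value function derived in the proof of Theorem~\ref{theo: the game is convex} and rewrite $v$ as a non-negative linear combination of two-player unanimity games, each carried by the NO together with a single SP. Once this decomposition is in place, the claim follows at once from linearity of the Shapley value combined with the standard fact that in a unanimity game on carrier $T$ every member of $T$ receives $1/|T|$ and all non-members receive $0$.

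First, setting $\alpha^i \defeq \max_{h^i} v^{h^i}\ge 0$ for each SP $i$, Eqn.~\eqref{eq: separated_value_function} together with the veto property of the NO yields
\[
v(\mathcal{S}) \;=\; \begin{cases} \sum_{i\in\mathcal{S}\setminus\{\text{NO}\}}\alpha^i & \text{if } \text{NO}\in\mathcal{S},\\ 0 & \text{otherwise.}\end{cases}
\]
Letting $u_{\{i,\text{NO}\}}(\mathcal{S})\defeq\mathbf{1}[\{i,\text{NO}\}\subseteq\mathcal{S}]$ for each SP $i$, a case check on both branches above shows that $v=\sum_{i\in\mathcal{N}\setminus\{\text{NO}\}}\alpha^i\, u_{\{i,\text{NO}\}}$.

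Next I apply linearity of the Shapley value together with the unanimity-game formula $\phi^k(u_T)=\mathbf{1}[k\in T]/|T|$. Since each carrier $\{i,\text{NO}\}$ has cardinality two, exactly half of every coefficient $\alpha^i$ is allocated to the NO and the other half to SP $i$, giving $\phi^{\text{NO}} = \tfrac{1}{2}\sum_i \alpha^i = \tfrac{1}{2}v(\mathcal{N})$, where the last equality uses $v(\mathcal{N})=\sum_i\alpha^i$. Efficiency of the Shapley value then forces $\sum_{i\in\mathcal{N}\setminus\{\text{NO}\}}\phi^i=\tfrac{1}{2}v(\mathcal{N})$, which is precisely the equal split between the NO and the set of all SPs asserted by the theorem.

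The only delicate point is justifying the unanimity-game decomposition, but this is essentially immediate from the separability established in Theorem~\ref{theo: the game is convex} and the fact that $v$ vanishes on every coalition missing the NO. For readers who prefer a self-contained computation, $\phi^{\text{NO}}$ can instead be obtained directly from its defining sum: swapping the summations over $\mathcal{S}\subseteq\mathcal{N}\setminus\{\text{NO}\}$ and $i\in\mathcal{S}$ reduces the coefficient of each $\alpha^i$ to
\[
\frac{1}{(N+1)!}\sum_{s=1}^{N}\binom{N-1}{s-1}s!\,(N-s)! \;=\; \frac{1}{(N+1)!}\cdot\frac{(N+1)!}{2} \;=\; \frac{1}{2},
\]
which recovers the same conclusion without invoking the general unanimity-game identity.
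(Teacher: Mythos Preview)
Your proof is correct and is cleaner than the paper's own argument. The paper proceeds by writing $\Delta_{\text{NO}}(\mathcal{S})=\sum_{j\in\mathcal{S}}\Delta_j(\mathcal{S})$ for $\mathcal{S}\subseteq\mathcal{N}\setminus\{\text{NO}\}$, substitutes this into the Shapley formula for $\phi_{\text{NO}}$, and then asserts that this yields $\phi_{\text{NO}}=\sum_{j}\phi_j$, invoking efficiency to conclude; the combinatorial step that actually evaluates the resulting double sum to $\tfrac{1}{2}v(\mathcal{N})$ is left implicit. Your route is genuinely different: you recognise that the separable value function is exactly $v=\sum_i\alpha^i\,u_{\{i,\text{NO}\}}$, a non-negative combination of two-player unanimity games, and then the equal split drops out of linearity plus the standard $\phi^k(u_T)=\mathbf{1}[k\in T]/|T|$ identity. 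This is precisely the intuition the paper cites from~\cite{Beal2014} after its proof, but which it does not use formally. Your closing combinatorial identity,
\[
\frac{1}{(N+1)!}\sum_{s=1}^{N}\binom{N-1}{s-1}s!\,(N-s)!=\frac{1}{2},
\]
is in fact the missing line in the paper's direct argument, so your appendix computation also patches the gap in their approach. Either way, both methods rest on the same two structural facts---separability of $v$ across SPs and the veto role of the NO---so the difference is one of packaging: your decomposition is shorter and reusable, while the paper's direct manipulation avoids appealing to the unanimity-game formula but requires the sum you supplied to be made explicit.
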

\begin{proof}
Consider the Shapley value of the game $(\mathcal{N}, v)$, i.e. the payoff vector $(\phi_i)_{ i \in \mathcal{N}}$. 
We want to prove that the Shapley value of the NO is equal to the sum of the Shapley values of all SPs. To calculate the Shapley value, we need the value of the marginal contribution of any player $i$ to the coalition, i.e. $\Delta_i(\mathcal{S})= \max_{h^i} v^{h^i}$. The NO is a veto player, and so, the $v$ function is null for coalitions without it. 
\begin{multline}
\label{eq:deltaNO}
    \Delta_\text{NO}(\mathcal{S})= v(\mathcal{S} \cup \{\text{NO}\}) =
    \\\sum_{i \in \mathcal{S}} \max_{h^i} v^{h^i}=\sum_{i \in \mathcal{S}}\Delta_j(\mathcal{S}),
    \forall \mathcal{S}\subseteq\mathcal{N}\setminus\{\text{NO}\}.
\end{multline}

Now, we can show that the Shapley value of the NO is equal to the sum of the SPs Shapley values. We know that the Shapley value is in the core, which is subject to the efficiency property, $\sum_{i \in \mathcal{N}}\phi_i = v(\mathcal{N})$. Hence, $v(\mathcal{N})= \phi_\text{NO} + \sum_{ j \in \mathcal{N}\setminus\{\text{NO}\}}\phi_j$. The Shapley value of the NO is
\begin{multline}
 \phi_\text{NO}=\frac{1}{|\mathcal{N}|!}\sum_{\mathcal{S} \subseteq \mathcal{N}\setminus\{NO\}}|\mathcal{S}|!\cdot (|\mathcal{N}|-|\mathcal{S}|-1)!\cdot\Delta_{\text{NO}}(\mathcal{S}) \underset{\eqref{eq:deltaNO}}{=} \\ \frac{1}{|\mathcal{N}|!}\sum_{\mathcal{S} \subseteq \mathcal{N}\setminus\{NO\}}|\mathcal{S}|!\cdot (|\mathcal{N}|-|\mathcal{S}|-1)!\cdot\sum_{j \in \mathcal{S}}\Delta_j(\mathcal{S}).
\end{multline}

This implies the coalitional value is divided equally between the NO and the set of SPs, $\phi_\text{NO} = \sum_{ j \in \mathcal{N}\setminus\{\text{NO}\}}\phi_j =\frac{v(\mathcal{N})}{2}.
$ This completes the proof.
\end{proof}

The intuition behind this equal sharing of the Shapley value between the NO and the SPs is based on~\cite{Beal2014}:{\it ``each game is decomposed into a weighted sum of unanimity games in which the Shapley value assigns an equal share of a unit to each veto player''.} In our case if the set of SPs is considered as one super-player, it is actually a veto player as well, because the value function is zero if no SP is in the coalition, since it would not be possible to collect revenues from users utilization.

\section{Application to Edge Computing}
\label{sec: numerical results}

\subsection{Parameters}
\subsubsection{Load}
\label{sec:load}

We define the load as an exogenous variable (\S\ref{sec:model}). To reproduce a realistic trend, we consider the daily traffic profile of a SP serving residential users, as modeled in~\cite{Vela2016}, i.e.
$l_t^i=a_0+\sum_{k=1}^K a_k\sin{(2k\pi\frac{t-t_k}{T})}$,
where $t$ is the timeslot and $T$ is the number of timeslots in one day; $a_k$ and $t_k$ are hyperparameters determining the amplitude and the offset of each of the $K$ sinusoidal components. 
We take their values from~\cite[Fig.2]{Vela2016}.

\subsubsection{Utility function and price}
\label{sec:utility-function-and-price}
As often observed in reality, we assume the utility~\eqref{eq:null_resources_null_utility} of any~$\text{SP}^i$ is characterized by a diminishing return effect~\cite{Salchow2008}:  the marginal utility increment becomes smaller by increasing the $h^i$. For this reason, we model the utility with the following increasing and concave function, similar to (1) of~\cite{Misra2015}:

\begin{equation}
\label{eq: utility_func}
    u^i(l_t^i, h^i)=\beta^i \cdot l^i_t \cdot (1- \emph{e}^{- \xi \cdot h^i}).
\end{equation}

The term $\beta^i$ is the \emph{benefit factor} of player $i$ which represents the benefit that a SP gets from serving one unit of load at the Edge. It is a multiplicative constant, null for the NO, $\beta^\text{NO}= 0$. The term $\xi$ models the shape of the diminishing return, i.e. how fast it saturates to its upper bound $\beta^i \cdot l^i_t$. Note that this utility function follows property~\eqref{eq:null_resources_null_utility}.

\subsection{Scenario with 2 SPs of the same type}
\label{sec:secnatio2SPssame}

In this case there are two SPs of the same type: $\beta^{\text{SP}^1}=\beta^{\text{SP}^2}=\hat p$ where $\hat p \triangleq\frac{d}{D \cdot T}$ is the price, $d=0.05$~\emph{dollars/millicores}, amortized over each of $T$ time slots over the investment duration, $\text{D}$. $\text{SP}^1$ and $\text{SP}^2$ have the same temporal trends, but $l_t^1=4l_t^2 \ \forall t$.

In Fig.~\ref{fig: twoSPs capacity and coalitional value}, we show the capacity of purchased CPU and the value of the grand coalition, as a function of the daily total load, $l^\text{tot}=\sum_{t=1}^T l_t^\text{tot}$. We observe that, the more the load the more the capacity installed to serve it. However, recall that the utility functions follow a diminishing return with respect to the resources, so, the trend of the capacity $C$ is sublinear. We observe a linear trend for coalitional value because the value function is linearly dependent on the load (see Eqn.~\eqref{eq: utility_func}).

\begin{figure}[t]
     \centering
 {\includegraphics[width=0.44\textwidth, height=90pt]{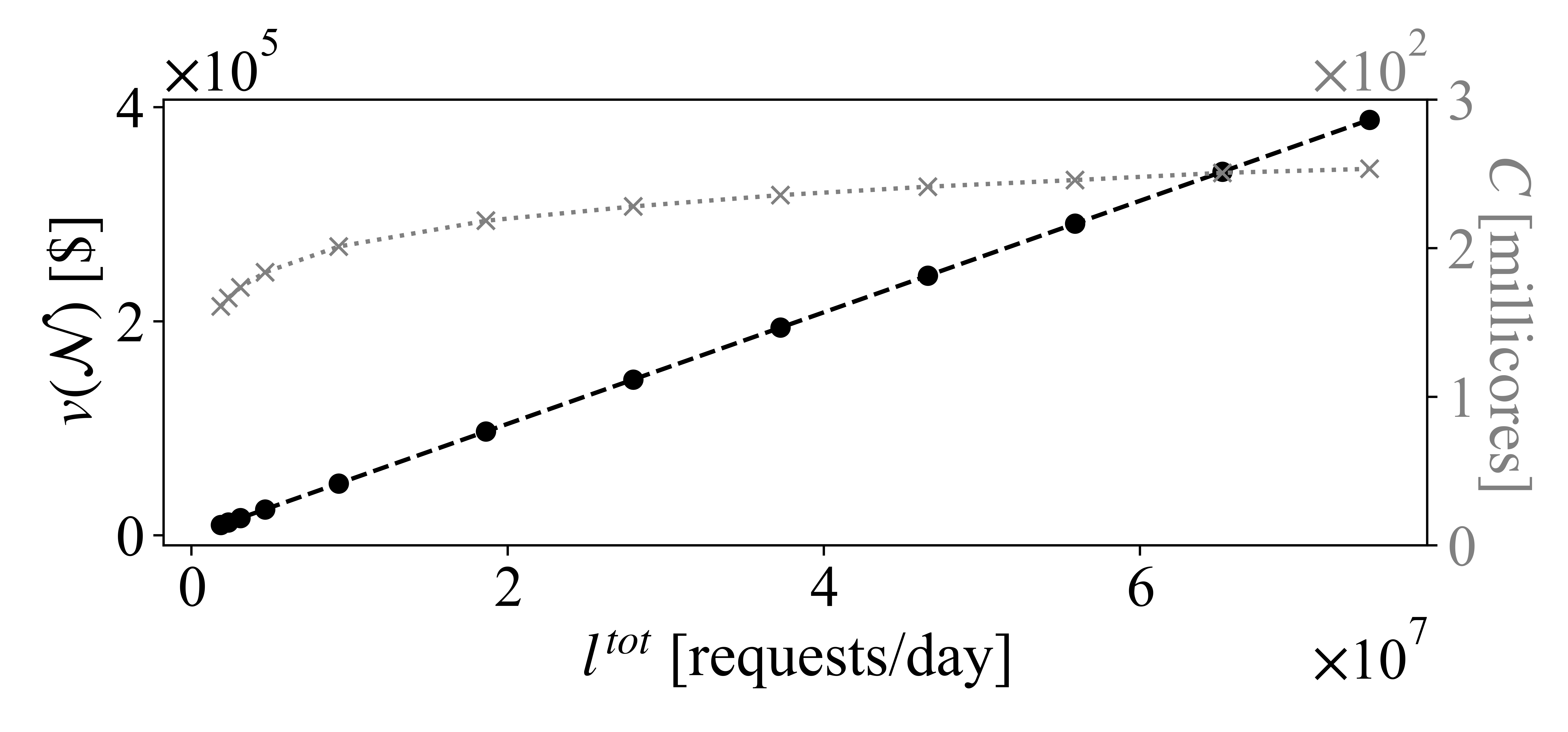}}
        \caption{Capacity and coalitional value as a function of the overall daily load}
        \label{fig: twoSPs capacity and coalitional value}
\end{figure}

 We observe in Fig.~\ref{fig: twoSPs capacity split and revenues contribution} capacity sharing between the SPs, $\text{SP}^1$ receives a larger capacity: it has to serve a larger part of the requests. Note that, even if the load of $\text{SP}^1$ is 4 times the load of $\text{SP}^2$, the difference between the resource allocated to them is not that big: a consequence of the diminishing return.

 The contribution of $\text{SP}^i$ to the coalitional revenues is defined as $\hat r^i=D \cdot \sum_{t=1}^T u^i(l_t^i,h^i)$.
 We denote the grand coalitional revenues as $r^\mathcal{N}=\sum_{i\in\mathcal{N}} r^i$, where $r^i$ is the result obtained in~\eqref{eq:payments1}-\eqref{eq:payments3}, i.e. the payoff of each player without considering the component of the payment. The term $\hat r^i$ is the amount of revenues produced by SP $i$, due to the served load during the overall duration of the coinvestment.
 
Fig.~\ref{fig: twoSPs capacity split and revenues contribution} shows that most contribution comes from $\text{SP}^1$, since its load is four times higher than that of $\text{SP}^2$, and the utility of any SP (and thus its contribution to the grand coalitional revenues) is proportional to the served load; indeed, we observe that $\hat r^{\text{SP}^1}= \frac{1}{4}\hat r^{\text{SP}^2}$.
Note that $h^\text{NO}$ and $r^\text{NO}$ are not in the figure, as the NO does not use resources, because its load is null; this implies that its utility is null so it does not produce revenues to the grand coalition by serving a load.
\begin{figure}[t]
     \centering
 {\includegraphics[width=0.44\textwidth]{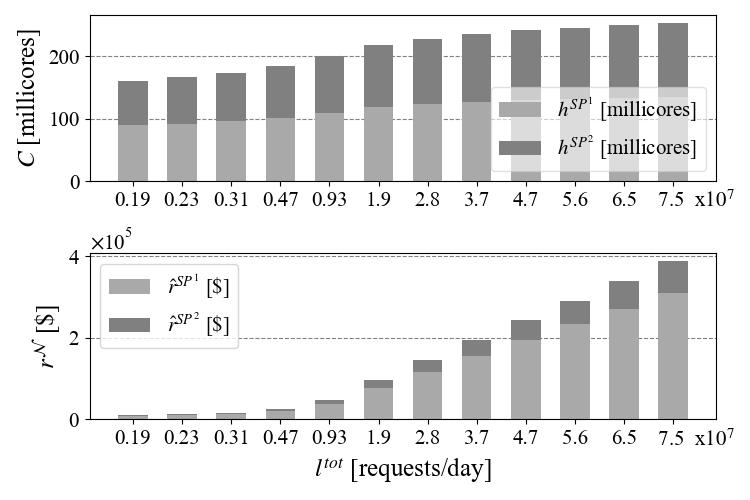}}
        \caption{Split of capacity and contributions to coalitional value}
        \label{fig: twoSPs capacity split and revenues contribution}
\end{figure}

\begin{figure}
     \centering
     \begin{subfigure}[b]{0.44\textwidth}
         \centering
        {\includegraphics[width=\textwidth]{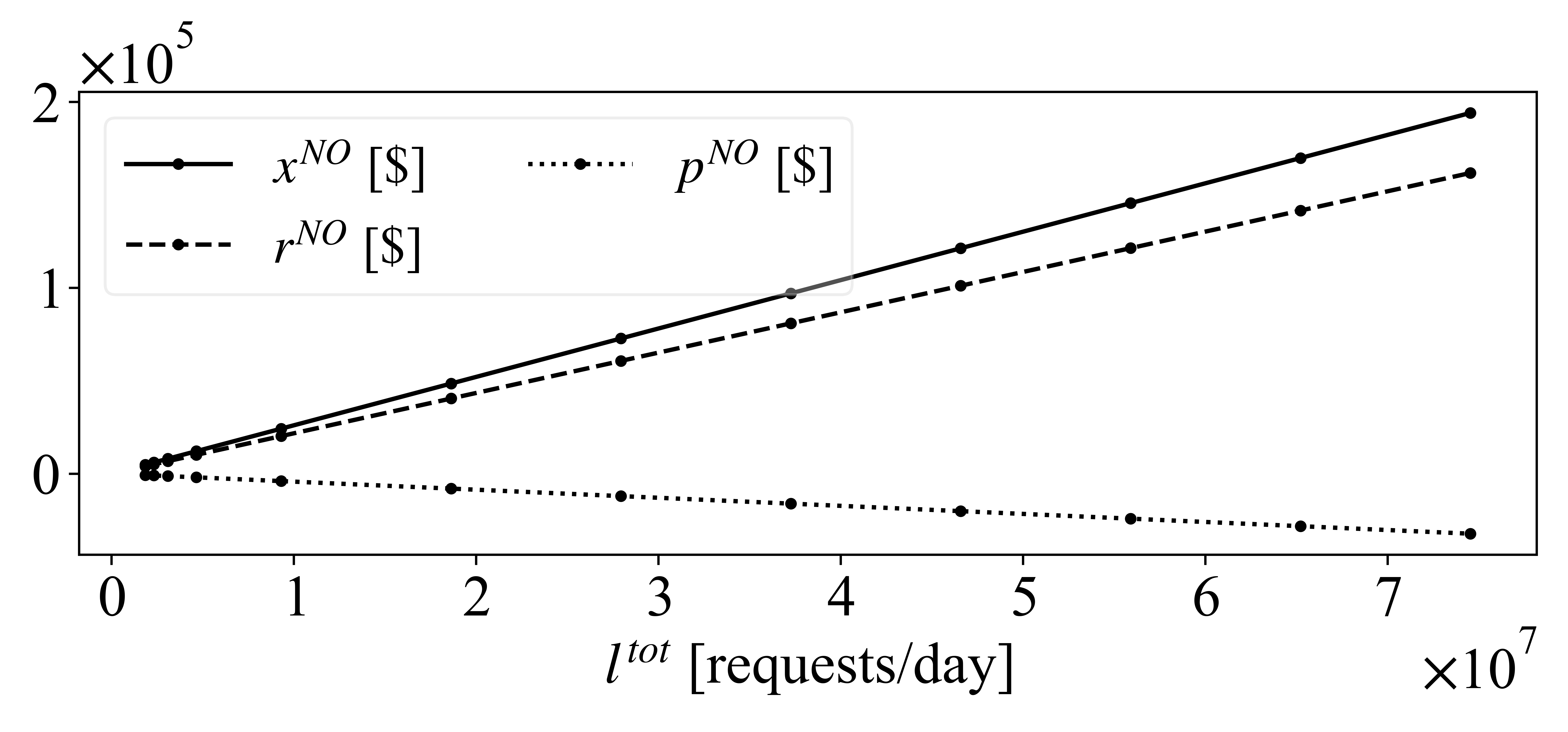}}
         \caption{Payoff, revenues and payment by NO}
         \label{fig: two Sps shapley NO}
     \end{subfigure}
     
          \begin{subfigure}[b]{0.44\textwidth}
         \centering
        {\includegraphics[width=\textwidth]{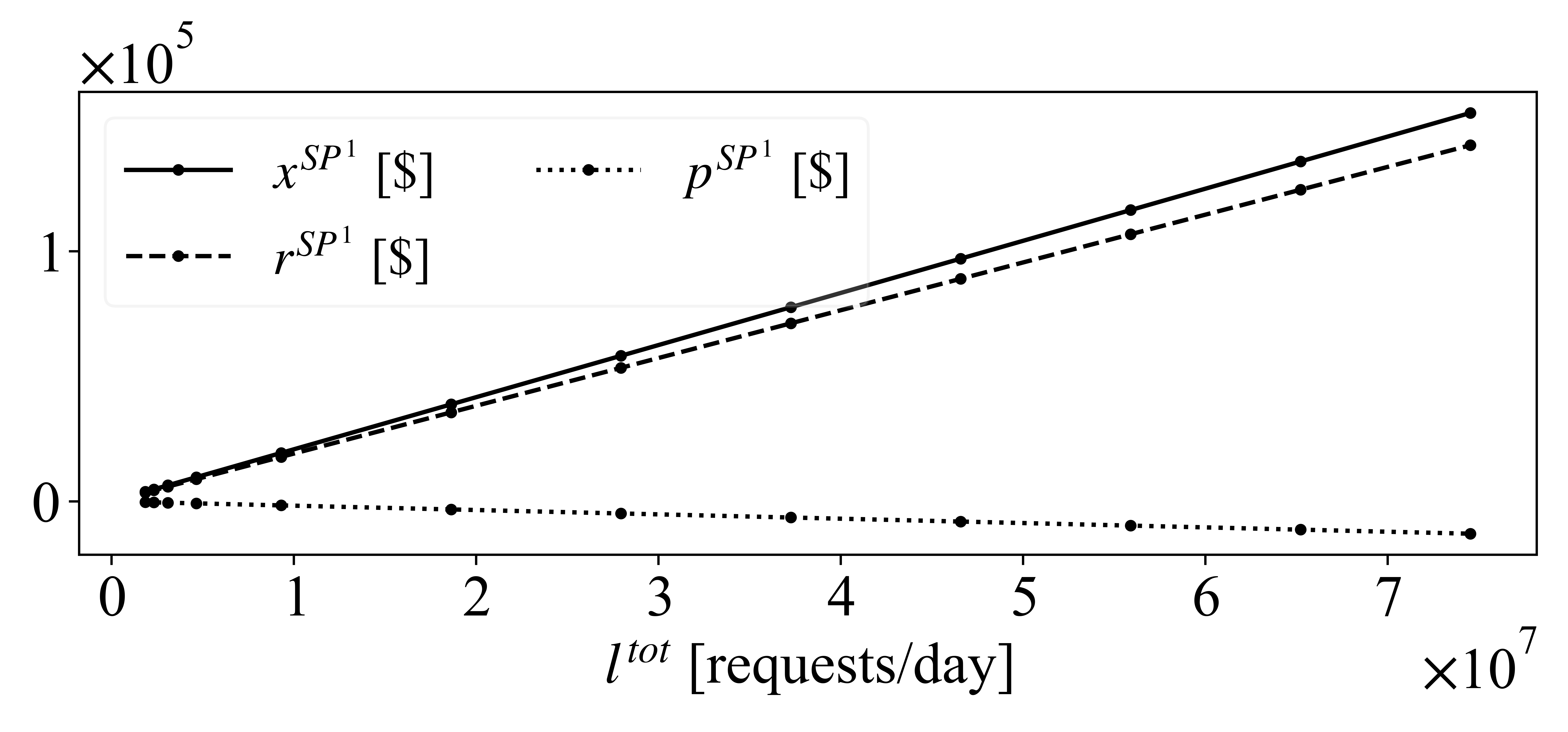}}
         \caption{Payoff, revenues and payment by $\text{SP}^1$}
         \label{fig: two Sps shapley SP1}
     \end{subfigure}
     
          \begin{subfigure}[b]{0.44\textwidth}
         \centering
        {\includegraphics[width=\textwidth]{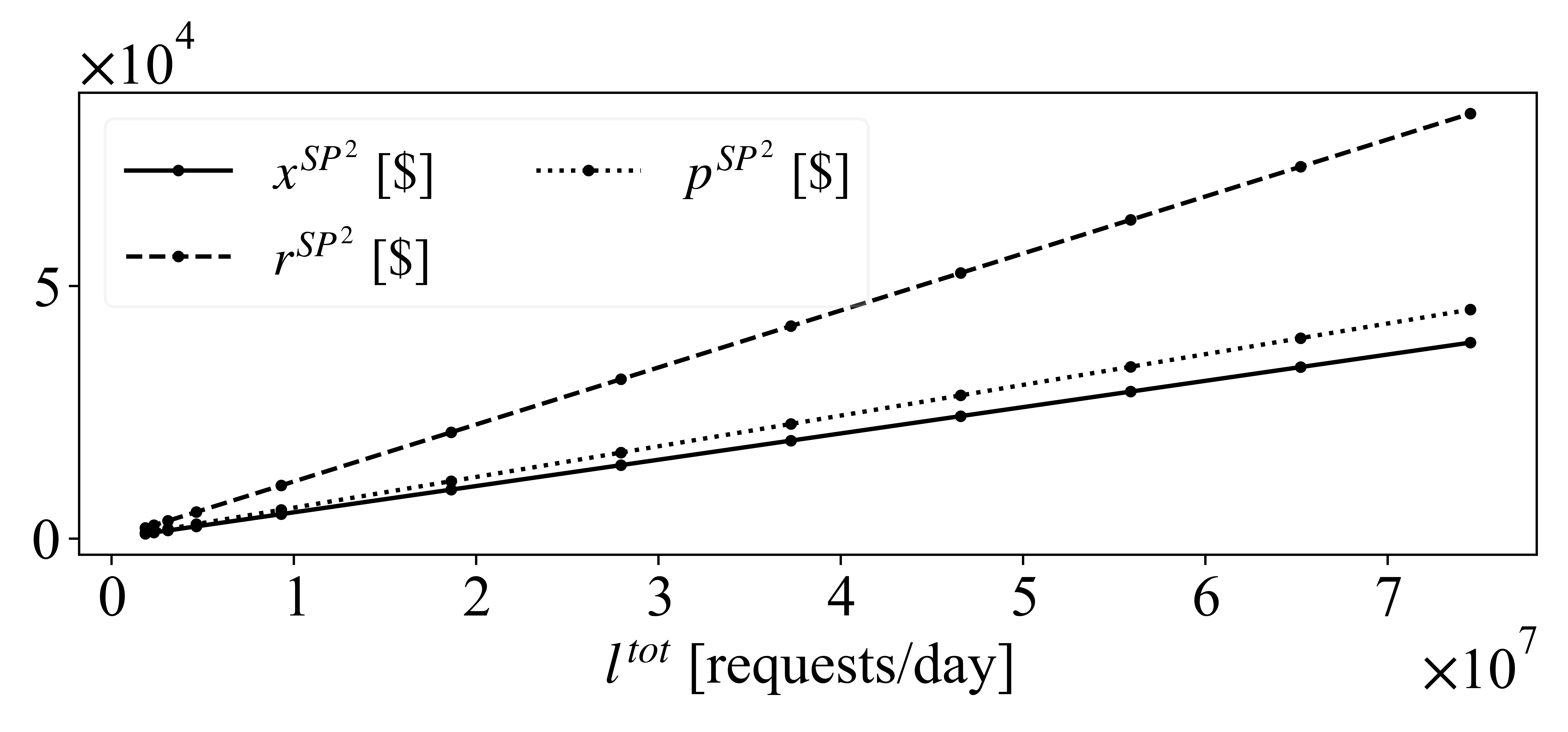}}
     \caption{Payoff, revenues and payment by $\text{SP}^2$}
     \label{fig: two Sps shapley SP2}
    \end{subfigure}
    \caption{Shapley value: payoffs, revenues and payments.}
    \label{fig: twoSPs payoffs Shapley}
\end{figure}

Fig.~\ref{fig: twoSPs payoffs Shapley} shows the outcome of the game, i.e. the payment $p^i$, the revenue $r^i$ and the payoff $x^i$ of each player $i\in\mathcal{N}$ given by the Shapley value. We first observe that the payoff of anyone increases with the total load, which is obvious as the revenues of the grand coalition is the sum of the utilities of each player, which in turn increase with the number of served requests. It is interesting to notice that only $\text{SP}^2$ actually pays to deploy the resources at the Edge, while the $\text{NO}$ and $\text{SP}^1$ have negative payments, so, they are not paying, but are being paid. In the case of $\text{SP}^1$, this means that it enjoys an additional gain from the coinvestment in the Edge resources, which sums to the revenue $r^{\text{SP}^1}$ directly coming from its customers.  The ``privilege'' of the NO and $\text{SP}^1$ can be explained by the fact that they are the most important for the coalition: NO is the veto player; $\text{SP}^1$ brings to the coalition most of the revenues collected from users (Fig.~\ref{fig: twoSPs capacity split and revenues contribution}).

\subsection{Scenario with 2 SPs of different types}
In this case we have two SPs offering different types of services. For instance, $\text{SP}^2$ may offer an extremely low-latency service, e.g. augmented reality, while $\text{SP}^1$ may offer a less stringent service, e.g. online gaming. In this case, the utility coming from serving a request at the Edge is much higher for $\text{SP}^2$ than for $\text{SP}^1$, since the latter could serve some of the requests (for instance requests not related with interactions with the player) from the Cloud without degrading too much the perceived user experience. Therefore, we consider now that $\beta^{\text{SP}^2}\ge \beta^{\text{SP}^1}$. As in the previous scenario, $\beta^\text{tot}=\beta^{\text{SP}^1}+\beta^{\text{SP}^2}=2\hat p$. 
We assume further that $\beta^{\text{SP}^1}=(1-\omega)\cdot \beta^\text{tot}$ and $\beta^{\text{SP}^2}=\omega\cdot \beta^\text{tot}$. We make $\omega$ vary in $[0.5,1]$. The case $\omega=0.5$ corresponds to the previous scenario, where the SPs were of the same type. Increasing $\omega$, the two SPs become more heterogeneous, and in particular $\text{SP}^2$ has higher benefits per unit of load than $\text{SP}^1$. We keep the load as before. 

We now show how resource allocation and payoff change with $\omega$. We observe in Fig.~\ref{fig: two Sps shapley compact changing omega avg capacity} that increasing $\omega$, the percentage of CPU given to the $\text{SP}^1$ decreases and is null for $\omega=1$, at which point it is given entirely for $\text{SP}^2$. This is due to the fact that, despite it attracts most of the user load, $\text{SP}^1$ is less useful to assign resources to, as its benefit factor becomes smaller with $\omega$. This tells us, as expected, that resource allocation at the Edge must be taken not only based on load, but also on the nature of the services, and in particular on time-sensitivity, which is reflected in a different benefit per unit of load satisfied at the Edge. In Fig.~\ref{fig: two Sps shapley omega} the payoff sharing reflects what we mentioned above. The marginal contribution brought by $\text{SP}^2$ increases since it produces most of the coalition revenues. In all the cases, the NO has 1/2 the coalitional value, (Th.~\ref{theo: payoff_proportions}).  

\begin{figure}
     \centering
     \begin{subfigure}[b]{0.4\textwidth}
         \centering
        {\includegraphics[width=0.9\textwidth]{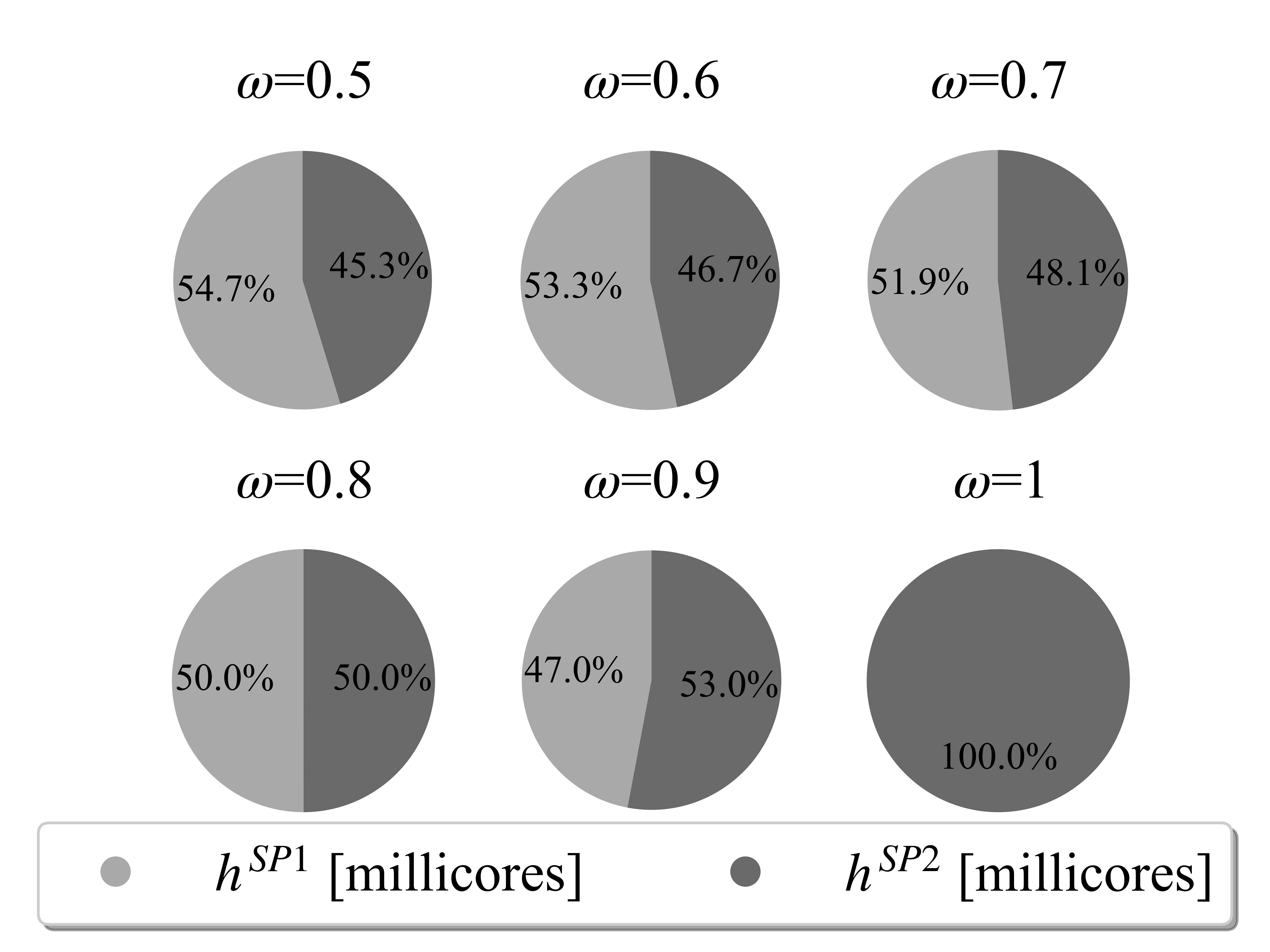}}
         \caption{Capacity subdivision among the players}
         \label{fig: two Sps shapley compact changing omega avg capacity}
     \end{subfigure}
     
          \begin{subfigure}[b]{0.4\textwidth}
         \centering
        {\includegraphics[width=0.9\textwidth]{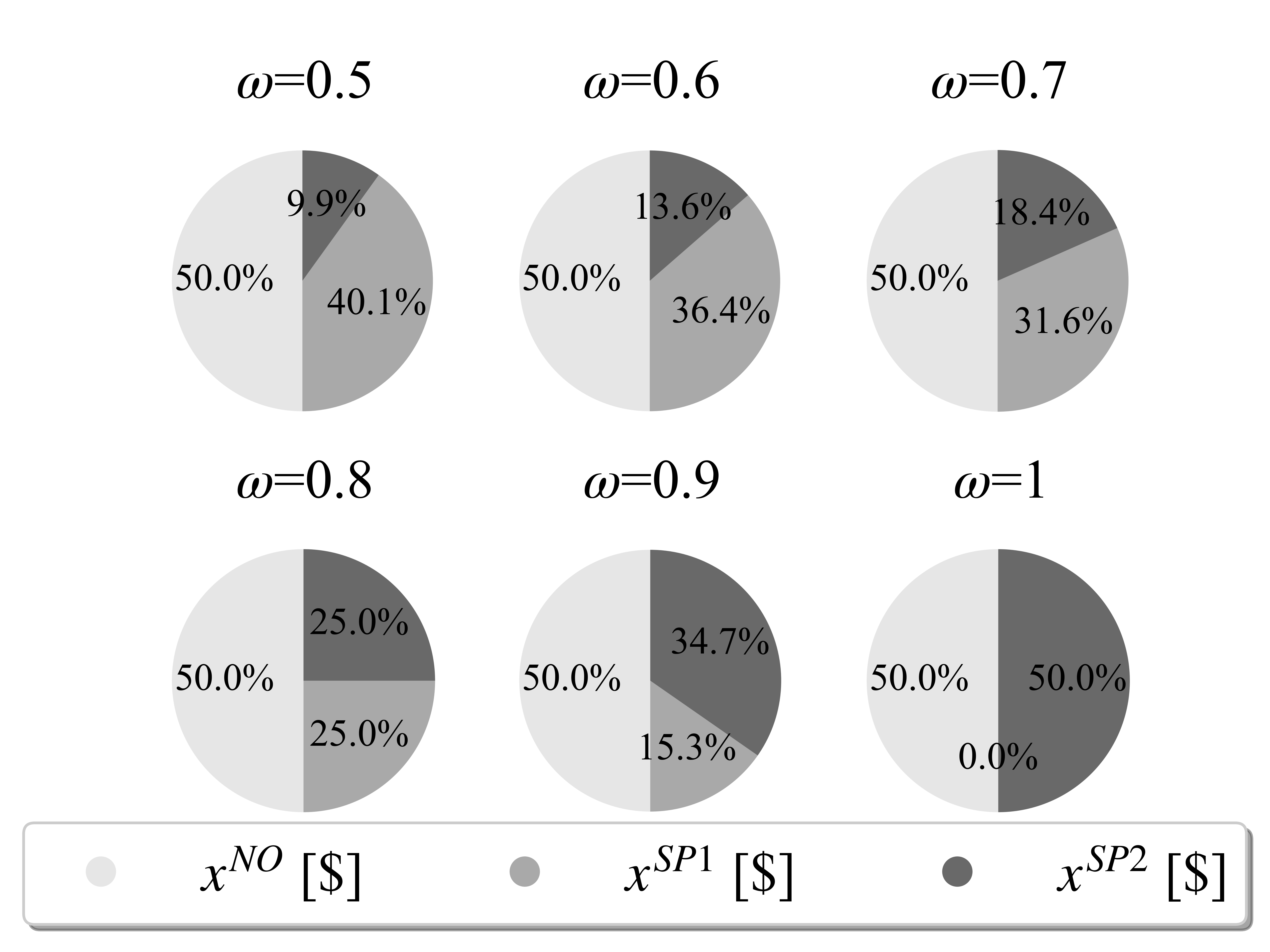}}
     \caption{Payoff sharing}
     \label{fig: two Sps shapley omega}
    \end{subfigure}
    \caption{Coalitional value and capacity function of $\omega$}
    \label{fig: two Sps shapley compact changing omega}
\end{figure}

\subsection{Price sensitivity analysis}

Here, we assess our model to show its behavior varying the number of SPs.
The consequence of increasing the price is that (Fig.~\ref{fig:manySpsSensitivity}) (i) the purchased capacity is reduced but in a sub-linear way, because of~\eqref{eq: utility_func} and the coalitional value decreases linearly. These trends remain consistent when changing the number $N$ of SPs. Fig.~\ref{fig:manySpsSensitivity} also confirms that adding a player in the game, brings a higher benefit in terms of the value of the coalition $v$. This is in line with the  supermodularity of $v$ and hence the convexity of the game (Th.~\ref{theo: the game is convex}) and its stability.

\begin{figure}
     \centering
     \begin{subfigure}[b]{0.4\textwidth}
         \centering
        {\includegraphics[width=0.85\textwidth]{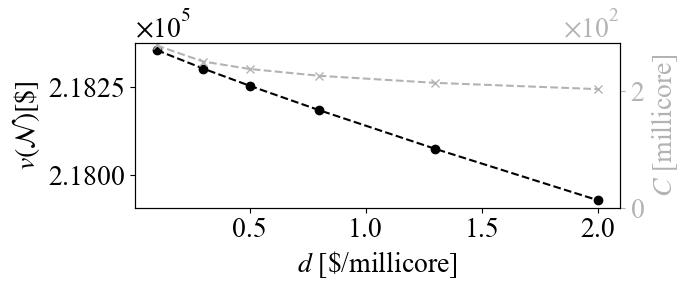}}
         \caption{NO and $N=2$ SPs}
         \label{fig:2SpsSensitivity}
     \end{subfigure}
     
          \begin{subfigure}[b]{0.4\textwidth}
         \centering
        {\includegraphics[width=0.85\textwidth]{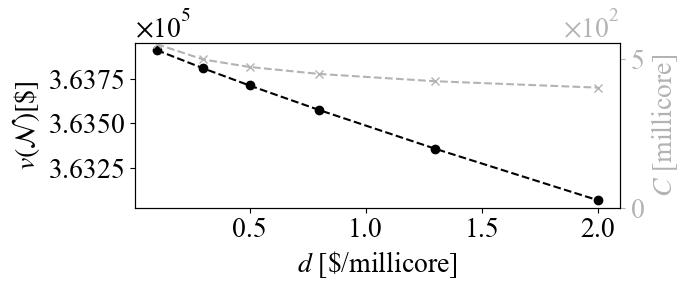}}
     \caption{NO and $N=4$ SPs}
     \label{fig:4SpsSensitivity}
    \end{subfigure}
      \begin{subfigure}[b]{0.4\textwidth}
         \centering
        {\includegraphics[width=0.85\textwidth]{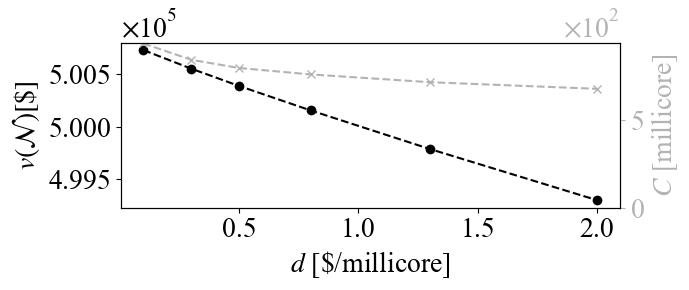}}
     \caption{NO and $N=7$ SPs}
     \label{fig:7SpsSensitivity}
    \end{subfigure}
    \caption{Coalitional value and capacity function of the price $d$ for different numbers of players}
    \label{fig:manySpsSensitivity}
\end{figure}

\section{Conclusion}
\label{sec: conclusion}
We proposed a coalitional game theory solution to enable coinvestment between heterogeneous players (NO and SPs) 
and applied it to the of deployment of Edge Computing. By showing the convexity of the game, we proved that the core is non-empty and that the Shapley value, which provides a fair way to divide income among players, lies in the core. So, it is always possible to form the grand coalition, made of all players. We
studied numerically the solution under different scenarios. For future work, we will consider adding a strategy-proof enforcement feature to ensure that players are truthful.
\vspace{-0.9cm}

\end{document}